\newcommand{\ignore}[1]{}
\newtheorem{theorem}{Theorem}
\newtheorem{lemma}{Lemma}
\renewcommand{\Pr}{{\bf Pr}}
\begin{document}

\title{Lower Bounds for Cover-Free Families}
\author{ Ali Z. Abdi \\ Convent of Nazareth High School \\
Grade 12, Abas 7, Haifa \and Nader H. Bshouty
\\ Dept. of Computer Science\\ Technion, Haifa, 32000}


%
\maketitle

\begin{abstract}
Let ${\cal F}$ be a set of blocks of a $t$-set $X$. $(X,{\cal F})$ is called
$(w,r)$-cover-free family ($(w,r)-$CFF) provided that, the intersection of any $w$ blocks in ${\cal F}$
is not contained in the union of any other $r$ blocks in ${\cal F}$.

We give new asymptotic lower bounds for the number
of minimum points $t$ in a $(w,r)$-CFF when $w\le r=|{\cal F}|^\epsilon$ for some constant $\epsilon\ge 1/2$.
\end{abstract}
\noindent {\bf Keywords:} Cover-Free Family, Lower Bound.

\section{Introduction}
Let ${\cal F}$ be a set of blocks (subsets) of a $t$-set $X$. $(X,{\cal F})$ is called
$(w,r)$-cover-free family ($(w,r)-$CFF) provided that, for any $w$ blocks  $A_1,A_2,\ldots,A_w\in {\cal F}$ and any other $r$ blocks $B_1,B_2,\ldots,B_r\in {\cal F}$ we have
$$\bigcap_{i=1}^w A_i\not\subseteq \bigcup_{j=1}^r B_j.$$
Since using De Morgan a $(w,r)-$CFF can be turned into $(r,w)-$CFF, throughout the
paper we assume that $w\le r$.
Cover-free families were first introduced in 1964 by Kautz and Singleton~\cite{KS64}.

Let $N(n,(w,r))$ denote the minimum number of points $|X|$ in any $(w,r)$-CFF
having $|{\cal F}|=n$ blocks. The best known lower bound for $N(n,(1,r))$ is~\cite{DR82,F96,R94}
\begin{eqnarray}\label{bound1}
N(n,(1,r))=\Omega\left(\frac{r^2}{\log r}\log n\right)
\end{eqnarray} when $r\le \sqrt{n}$ and $\Omega(n)$ when $r>\sqrt{n}$.
The constant of the $\Omega()$ is asymptotically $1/2$, $1/4$ and $1/8$, respectively.
Stinson et. al, \cite{SWZ00}, proved that
\begin{eqnarray}\label{base1}
N(n,(w,r))\ge N(n-1,(w-1,r))+N(n-1,(w,r-1)).
\end{eqnarray} They then use it with (\ref{bound1})
to prove two bounds. The first bound
is
\begin{eqnarray}\label{fb}
N(n,(w,r))\ge \Omega\left( \frac{{w+r\choose w}(w+r)}{\log {w+r\choose w}}\log n\right)
 \end{eqnarray} when $r\le \sqrt{n}$,~\cite{SWZ00,MW04}, and
\begin{eqnarray}\label{bst}
N(n,(w,r))\ge \Omega\left( \frac{{w+r\choose w}}{\log {(w+r)}}\log n\right)
\end{eqnarray} for
any $r\le n$,~\cite{SWZ00}.
To the best of our knowledge (\ref{bst}) is the best bound known when $\sqrt{n}\le r\le n$.
D'yachkov et. al. breakthrough result,~\cite{DVPS14}, implies that for $r\le \sqrt{n}$ and $r,n\to\infty$
\begin{eqnarray}\label{fb2}
N(n,(w,r))= \Theta\left( \frac{{w+r\choose w}(w+r)}{\log {w+r\choose w}}\log n\right)
\end{eqnarray}
and for $r\ge \sqrt{n}$ and $r,n\to\infty$
\begin{eqnarray}\label{bst2}
N(n,(w,r))\le O\left( \frac{r}{w}\cdot\frac{{w+r\choose w}}{\log {(w+r)}}\log n\right).
\end{eqnarray}

In this paper we give a new lower bound for $(w,r)$-CFF when $r>\sqrt{n}$.
We combine the two techniques used in \cite{SWZ00,MW04} and \cite{AA05} to give the following asymptotic lower bound.

\begin{theorem}\label{T12}
For any $2\le k\le w<r\le n/2$ and $$(n+k-1-w)^{\frac {k-1}{k}} \leq r \leq (n+k-w)^{\frac {k}{k+1}}$$
$$N(n,(w,r)) \ge  \frac {k^kk!}{2(k+1)^{2k}} \frac {r^{w+1}}{(w+1)!\ln^k r} = \Omega \left(\frac{\sqrt{k}}{e^k}\cdot\frac {r^{w+1}}{(w+1)!\ln^{k+1} r}\log n\right)$$
and for $$r=\Omega\left(({n\log n})^{\frac{w}{w+1}}\right)$$
$$N(n,(w,r)) = \Theta\left( {n\choose w}\right).$$
\end{theorem}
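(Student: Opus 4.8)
The plan is to combine the subadditive recurrence (\ref{base1}) of \cite{SWZ00,MW04} with a direct first--moment argument on $w$--wise intersections in the spirit of \cite{AA05}. Throughout write $n=|{\cal F}|$, $t=|X|$, let $d(x)$ be the number of blocks through a point $x$, and for a $w$--subset $W\subseteq{\cal F}$ set $I_W=\bigcap_{A\in W}A$. The CFF property is equivalent to the statement that, for every such $W$, the set $I_W$ cannot be covered by any $r$ blocks lying outside $W$; equivalently the minimum number of outside blocks needed to cover $I_W$ exceeds $r$. This is strictly stronger than ``$I_W\neq\emptyset$'' and is the feature that the recurrence alone does not exploit: it is the source of the extra factor of order $r$ that separates the target $r^{w+1}/(w+1)!$ from the $r^{w}/w!$ that a naive count produces.

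First I would record the basic double count. A point $x$ lies in $I_W$ exactly when $W\subseteq\{A:x\in A\}$, so $x$ belongs to ${d(x)\choose w}$ intersections, and $x$ avoids a random outside $r$--set $R$ with probability ${n-d(x)\choose r}/{n-w\choose r}$. Since every pair $(W,R)$ leaves at least one survivor, summing over points gives $\sum_x{d(x)\choose w}{n-d(x)\choose r}\ge{n\choose w}{n-w\choose r}$, and replacing the summand by its maximum yields
\[
t\ \ge\ \frac{{n\choose w}{n-w\choose r}}{\displaystyle\max_{d}{d\choose w}{n-d\choose r}}.
\]
The maximizing degree is $d^{\ast}=(wn-r)/(w+r)\approx wn/r$, and evaluating the ratio there already gives a bound of order $(er/w)^{w}$, i.e.\ of order $r^{w}/w!$. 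The remaining factor $\approx r/(w+1)$ up to the target is exactly what ``minimum cover $>r$'' is meant to supply, since it forces each $I_W$ to be large and hard to cover rather than merely nonempty.

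To harvest that factor with the stated constant I would interleave the two techniques in a $k$--level scheme. The idea is to apply the recurrence (\ref{base1}) $k$ times to reduce to a core configuration and to feed the covering count in at each level; equivalently, one partitions the points into $k$ degree bands running up to $d^{\ast}\approx wn^{1/(k+1)}$, where the integer $k$ is fixed so that $r$ lies in the admissible window $(n+k-1-w)^{(k-1)/k}\le r\le(n+k-w)^{k/(k+1)}$. Each level contributes one logarithmic factor, producing the $\ln^{k}r$ denominator, and optimizing the band thresholds produces the coefficient $k^{k}k!/(2(k+1)^{2k})$. I expect this step to be the main obstacle: making the band decomposition rigorous against an arbitrary degree sequence, and extracting the precise constant rather than merely the correct order, is where the real work lies. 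Once the first estimate is in hand, the equivalent $\Omega$--form follows because in this window $\ln r\asymp\frac{k}{k+1}\ln n$, so one factor $\ln r$ in the denominator may be traded for $\log n$ in the numerator.

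For the second statement the upper bound $N(n,(w,r))\le{n\choose w}$ is immediate: take $X={[n]\choose w}$ and blocks $A_i=\{S\in X:i\in S\}$; for disjoint index sets $W,R$ of sizes $w,r$ the $w$--set $W$ itself lies in $\bigcap_{i\in W}A_i$ but in no $A_j$ with $j\in R$, so this is a $(w,r)$--CFF on ${n\choose w}$ points. For the matching lower bound I would specialize the first part to $k=w$: there $r^{w+1}/\ln^{w}r$ is the governing term, and taking $r=\Omega((n\log n)^{w/(w+1)})$ makes $r^{w+1}=\Omega(n^{w}\log^{w}n)$, so that the $\ln^{w}r=\Theta(\log^{w}n)$ loss in the denominator is exactly absorbed and the bound becomes $\Omega(n^{w}/(w+1)!)=\Omega({n\choose w})$ --- the logarithmic slack in the threshold is precisely what cancels the $\ln^{w}r$ factor, so this is the one place where $r$ is pushed just past the clean window. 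Since $N(n,(w,r))$ is nondecreasing in $r$ (a $(w,r)$--CFF is also a $(w,r')$--CFF for $r'\le r$), the lower bound persists for all larger $r\le n/2$, and together with the upper bound this gives $N(n,(w,r))=\Theta({n\choose w})$.
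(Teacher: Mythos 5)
Your proposal does not actually contain a proof of the main inequality; it contains a correct but much weaker counting bound plus an acknowledged placeholder where the central argument should be. The double count $\sum_x{d(x)\choose w}{n-d(x)\choose r}\ge{n\choose w}{n-w\choose r}$ is fine and gives order $r^w/w!$, but the passage from there to $\frac{k^kk!}{2(k+1)^{2k}}\cdot\frac{r^{w+1}}{(w+1)!\ln^kr}$ is exactly the content of the theorem, and your ``$k$-level band decomposition'' is never specified: you do not say what the bands are, how a band interacts with the recurrence (\ref{base1}), or why each band costs only one factor of $\ln r$ --- and you say yourself that making this rigorous ``is where the real work lies.'' The paper's mechanism is genuinely different, and you would not recover it by optimizing degree bands. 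It has two separate moving parts: (i) a base-case lemma (Lemma~\ref{T1}) proving, for \emph{every} $w\le r\le n/2$, that $N(n,(w,r))\ge\min\left((1-\epsilon)\frac{w^w}{(w+1)^{2w+1}}\frac{r^{w+1}}{\ln^wr},\ \epsilon{n\choose w}\right)$; this is where the property ``no $r$ outside blocks cover $I_W$'' is exploited, by partitioning the incidence vectors $v^{(i)}$ into $V_0$ (weight $=w$), $V_1$ (weight between $w$ and $m$) and $V_2$ (weight $\ge m$) with $m=(w+1)^2n\ln r/(wr)$, choosing a $w$-tuple $j'$ hit by at most $r/(w+1)$ vectors of $V_1$, covering those with $r_1\le r/(w+1)$ extra blocks, and covering $V_2$ with a random set of $r_2=\lceil wr/(w+1)\rceil$ blocks, which produces a forbidden cover if all three classes are small. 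All $k$ logarithms appear at once in this base case (from the threshold $m$ raised to the $k$-th power), not one per level. (ii) An induction on $w$ starting at $w=k$ using the unrolled recurrence $N(n,(w+1,r))\ge\sum_{j=1}^rN(n-r+j-1,(w,j))$, each step gaining a factor of roughly $r/(w+2)$ and introducing no new logarithms. Your sketch inverts this structure (recursing $k$ times, one logarithm per level), so even as a road map it points in the wrong direction.

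The second part also has a gap. Your upper bound construction is correct, but for the lower bound you evaluate the first part at $r=\Omega((n\log n)^{w/(w+1)})$, which lies outside the window $r\le(n+k-w)^{k/(k+1)}$ in which that bound is asserted. The monotonicity patch only lets you import the bound from $r_0=n^{w/(w+1)}$, where $r_0^{w+1}=n^w$ and the $\ln^wr_0$ in the denominator is \emph{not} cancelled, leaving $\Omega(n^w/\log^wn)$ rather than $\Omega\left({n\choose w}\right)$. The paper avoids this circularity because Lemma~\ref{T1} carries no upper restriction on $r$ beyond $r\le n/2$, and for $r=\Omega((n\log n)^{w/(w+1)})$ the minimum there is attained by the $\epsilon{n\choose w}$ term.
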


Our bound is
$$\Theta\left(\frac{\sqrt{k}\cdot r}{w(e\ln r)^k}\right)$$ times greater than the previous bound
in (\ref{bst}). In particular, when $k$ is constant, our lower bound improves the bound in
(\ref{bst}) to
\begin{eqnarray}\label{bst3}
N(n,(w,r))\ge \Omega\left( \frac{r}{w\log^k r}\cdot \frac{{w+r\choose w}}{\log  {(w+r)}}\log n\right).
\end{eqnarray}

A slightly better bound can be achieved when $(n+k-w)^{\frac{k}{k+1}}\le r\le (n+k-w)^{\frac{k}{k+1}}\ln^{1/(k+1)}n$.

For example, let $w=4$. The table in Figure~\ref{exa} compares our
results with the previous results (asymptotic values)

\begin{figure}[h!]
\begin{center}
\begin{tabular}{c|l|l|l|}
& Previous Lower   &Upper & Our Lower\\
$r$ & Bounds (\ref{fb}), (\ref{bst}) & Bound~\cite{DVPS14} &Bound\\
\hline \hline
$r\le n^{1/2}$ & ${r^5}\frac{\log n}{\log r}$ & $r^5\frac{\log n}{\log r}$& -----\\
\hline
$n^{1/2}\le r\le n^{2/3}$ & ${r^4}\frac{\log n}{\log r}$& $r^5\frac{\log n}{\log r}$&${r^5}\frac{\log n}{\log^3 r}$\\
\hline
$n^{2/3}\le r\le n^{3/4}$ & ${r^4}\frac{\log n}{\log r}$& $r^5\frac{\log n}{\log r}$&${r^5}\frac{\log n}{\log^4 r}$\\
\hline
$n^{3/4}\le r\le n^{4/5}$ & ${r^4}\frac{\log n}{\log r}$& $r^5\frac{\log n}{\log r}$&${r^5}\frac{\log n}{\log^5 r}$\\
\hline
$ n> r\ge (n\log n)^{4/5}$ & $r^4$& $n^4$&$n^4$\\
\hline
\end{tabular}
\end{center}
\caption{Results for $w=4$.}
\label{exa}
\end{figure}

\section{First Lower Bound}
In this section we prove
\begin{lemma}\label{L11} Let $w\le r\le n/2$. If
$$r=\Omega\left(\left(n\log n\right)^{\frac{w}{w+1}}\right)$$
then
\begin{eqnarray}
N(n,(w,r))=\Theta\left( {n\choose w}\right).
\end{eqnarray}
Otherwise,
\begin{eqnarray}
N(n,(w,r))\ge \Omega\left(\left(\frac{r}{(w+1)\ln r}\right)^{w+1}\log n\right).
\end{eqnarray}
\end{lemma}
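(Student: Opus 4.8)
The plan is to split the two regimes, and to observe that the second (general) bound already does most of the work for the first. For the \emph{upper} bound $N(n,(w,r))=O\!\left({n\choose w}\right)$ I would exhibit the standard construction: take $X={[n]\choose w}$, and for each block-index $i\in[n]$ let $A_i=\{S\in X: i\in S\}$. For any $w$ indices $a_1,\dots,a_w$ the only $S$ lying in all of $A_{a_1},\dots,A_{a_w}$ is $S=\{a_1,\dots,a_w\}$, and this $S$ lies in no $A_b$ with $b\notin\{a_1,\dots,a_w\}$; hence $(X,\{A_i\})$ is a $(w,r)$-CFF with ${n\choose w}$ points for \emph{every} $r\le n-w$. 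For the matching lower bound I would use monotonicity: since a $(w,r)$-CFF is also a $(w,r')$-CFF for every $r'\le r$, the quantity $N(n,(w,r))$ is nondecreasing in $r$, so it suffices to prove the general bound and evaluate it at the threshold $r\approx(n\log n)^{w/(w+1)}$. A direct substitution shows that at this value $\left(\frac{r}{(w+1)\ln r}\right)^{w+1}\log n=\Omega\!\left(\frac{n^w}{w^{w+1}}\right)=\Omega\!\left({n\choose w}\right)$, so the general bound together with monotonicity forces $N(n,(w,r))=\Omega\!\left({n\choose w}\right)$ once $r$ exceeds the threshold, and combined with the construction this yields $\Theta\!\left({n\choose w}\right)$.

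The real content is the general bound $N(n,(w,r))\ge\Omega\!\left(\left(\frac{r}{(w+1)\ln r}\right)^{w+1}\log n\right)$, and this is the direct (AA05-style) argument rather than the recursion (\ref{base1}); I would reserve the Stinson--Wei--Zhu / Martirosyan--Weck recursion for upgrading this lemma to the $k$-parametrized family in Theorem~\ref{T12}. First I would normalize the family: by pigeonhole a $1/O(\log t)$ fraction of the $n$ blocks have size within a factor $2$ of some common value $\theta t$, so I may restrict to $n'=n/O(\log t)$ blocks all of weight $\approx\theta t$, losing only a constant and a $\log t$ factor (which is absorbed since $\log n'=\Theta(\log n)$ in the relevant range). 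The crucial structural step is to convert the $(w,r)$-CFF property into a bounded-overlap condition: for every $w$-subset ${\cal A}$, the set $\bigcap{\cal A}$ cannot be covered by $r$ other blocks, and for equal-weight blocks this forces the generalized $(w{+}1)$-wise intersections to be small relative to $\theta t$ --- concretely, few blocks can each contain a $\gtrsim(\theta t)/r$ fraction of any fixed $\bigcap{\cal A}$. Turning ``covering number $>r$'' into a clean \emph{multiplicative} overlap bound valid for all $w$ simultaneously is the first main obstacle; I expect to control it by a sampling argument, deleting each block independently with a well-chosen probability $p\approx (\ln r)/r$ so that a random $r$-subset would otherwise cover $\bigcap{\cal A}$, and reading off the resulting bound on the typical overlap.

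With the bounded-overlap condition in hand, I would finish by a packing / entropy estimate: the number of weight-$\theta t$ subsets of an $t$-set whose pairwise and higher overlaps obey the bound above is at most $\exp\!\left(t\cdot\rho(\theta,w,r)\right)$ for a suitable rate $\rho$, obtained by a volume (sphere-packing) argument rather than the sharper Johnson bound --- this is why the lemma carries an extra $\ln r$ compared with the optimal $(1,r)$ value. Thus $\log n'\le t\cdot\rho$, i.e. $t\ge \log n/\rho$, and the last step is to optimize the weight parameter $\theta$ (I anticipate $\theta\approx (w+1)/r$ balancing the intersection budget against the weight) so that $\rho\approx\left(\frac{(w+1)\ln r}{r}\right)^{w+1}$, which rearranges to the claimed $t=\Omega\!\left(\left(\frac{r}{(w+1)\ln r}\right)^{w+1}\log n\right)$. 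The two points I expect to be delicate are precisely this optimization of $\theta$ and $p$ (keeping the constants clean enough that the threshold computation in the first paragraph matches ${n\choose w}$) and, more seriously, the generalization of the overlap/packing estimate from $w=1$ to arbitrary $w$, where the relevant objects are $(w{+}1)$-wise intersections and the naive first-moment bound is far too weak to produce the correct power $r^{w+1}$.
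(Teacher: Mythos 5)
Your high-level architecture (the standard ${n\choose w}$-point construction for the upper bound, plus monotonicity of $N(n,(w,r))$ in $r$ to transfer the general lower bound into the $\Theta\left({n\choose w}\right)$ regime) is reasonable, but the core of the lemma --- the bound $N(n,(w,r))\ge\Omega\bigl((r/((w+1)\ln r))^{w+1}\log n\bigr)$ --- is not actually proved in your proposal. You outline a F\"uredi/Ruszink\'o-style plan (reduce to blocks of nearly equal weight, extract a multiplicative bound on $(w{+}1)$-wise overlaps from the covering condition, then a sphere-packing count), and you yourself identify the two steps that carry all the mathematical content (turning ``covering number $>r$'' into a clean overlap bound valid for all $w$, and generalizing the packing estimate from $w=1$ to arbitrary $w$) as unresolved obstacles, even remarking that the naive first-moment bound is too weak to produce the power $r^{w+1}$. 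As it stands this is a research plan with the hard part missing, not a proof.

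The paper's argument is quite different and avoids both obstacles. It works on the dual side: writing the incidence vectors $v^{(i)}\in\{0,1\}^n$ of the $N$ points, it partitions the points into $V_0$ (contained in exactly $w$ blocks), $V_1$ (degree strictly between $w$ and $m$) and $V_2$ (degree at least $m$) with $m=\frac{(w+1)^2n\ln r}{wr}$, and shows that if $|V_0|\le\epsilon{n\choose w}$ and $|V_1|,|V_2|$ are both below the claimed bound, then one can pick $w$ blocks $F_{j_1'},\dots,F_{j_w'}$ whose index tuple is hit by no point of $V_0$ and by at most $r/(w+1)$ points of $V_1$, cover those surviving $V_1$-points by $r_1\le r/(w+1)$ further blocks, and cover all of $V_2$ by $r_2=\lceil wr/(w+1)\rceil$ randomly chosen blocks; since $r_1+r_2\le r$, this exhibits a violation of the $(w,r)$-CFF property. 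This yields $N\ge\min\bigl((1-\epsilon)\frac{w^w}{(w+1)^{2w+1}}\frac{r^{w+1}}{\ln^w r},\ \epsilon{n\choose w}\bigr)$ in one shot, so the $\Theta\left({n\choose w}\right)$ regime falls out of the $V_0$ case directly and no monotonicity-plus-threshold step is needed. On that last point, note also that your substitution $\bigl(\frac{r}{(w+1)\ln r}\bigr)^{w+1}\log n=\Omega(n^w/w^{w+1})=\Omega\left({n\choose w}\right)$ is off by a factor of order $e^w\sqrt{w}$, since ${n\choose w}\sim n^we^w/(w^w\sqrt{2\pi w})$; this is harmless for fixed $w$ but fails when $w$ grows, which is precisely the setting of Theorem~\ref{T12}.
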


Lemma~\ref{L11} follows from the following
\begin{lemma}\label{T1} Let $\epsilon<1$ be any constant.
For $w\le r\le n/2$ we have
\begin{eqnarray}
N(n,(w,r))\ge \min\left((1-\epsilon)\frac{w^w}{(w+1)^{2w+1}}\cdot \frac{r^{w+1}}{\ln^wr}\ \ \  ,
\ \ \  \epsilon{n\choose w}\right)
\end{eqnarray}
\end{lemma}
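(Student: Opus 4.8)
The plan is to translate the cover-free condition into a statement about transversals (hitting sets) and then to bound the number of points by a randomized covering estimate, summed over all $w$-subsets of the index set. Identify each point $x\in X$ with the set $S_x=\{i:x\in F_i\}\subseteq[n]$ of blocks through it. Fixing a $w$-set $W\subseteq[n]$, write $P_W=\{x:W\subseteq S_x\}$ for the points of $\bigcap_{i\in W}F_i$ and $B_x^W=S_x\setminus W$ for $x\in P_W$. Unwinding the definition, the $(w,r)$-CFF property at $W$ says exactly that no $r$-set $R\subseteq[n]\setminus W$ meets every $B_x^W$; that is, the transversal number $\tau_W$ of the set system $\{B_x^W:x\in P_W\}$ satisfies $\tau_W\ge r+1$. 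This ``$(w,r)$ as a transversal'' reformulation is what lets me merge a global bookkeeping over the ${n\choose w}$ sets $W$ with a hitting-set count per $W$.

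The second term of the minimum comes from a dichotomy. Call $W$ \emph{private} if some point has $S_x=W$ exactly (so $B_x^W=\emptyset$); such a point makes $\tau_W$ infinite and the condition at $W$ trivial. Distinct private $W$'s need distinct points, since a point with $|S_x|=w$ serves only $W=S_x$, so if at least $\epsilon{n\choose w}$ of the $w$-sets are private we immediately get $t\ge\epsilon{n\choose w}$. Hence I may assume at least $(1-\epsilon){n\choose w}$ sets $W$ are \emph{good}, meaning $B_x^W\ne\emptyset$ for every $x\in P_W$.

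For each good $W$, I would upper bound $\tau_W$ by the standard randomized transversal: keep each element of $[n]\setminus W$ independently with probability $p$, then repair each still-unhit $B_x^W$ with one extra element. Taking expectations gives, for every $p\in(0,1)$, the bound $r+1\le\tau_W\le p(n-w)+\sum_{x\in P_W}(1-p)^{|S_x|-w}$. Summing over the good $W$ and regrouping the double sum by points (each $x$ lies in $P_W$ for the ${|S_x|\choose w}$ sets $W\subseteq S_x$) yields
\[
(1-\epsilon){n\choose w}(r+1)\le p(n-w){n\choose w}+t\cdot\max_{s}{s\choose w}(1-p)^{s-w}.
\]
Choosing $p$ of order $r/n$ makes the first term on the right at most half the left-hand side, leaving $t$ bounded below by the ratio of ${n\choose w}(r+1)$ to $\max_s{s\choose w}(1-p)^{s-w}$.

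The crux, and the step I expect to be most delicate, is estimating $M=\max_s{s\choose w}(1-p)^{s-w}$ and squeezing out the clean constant. Here ${s\choose w}\approx s^w/w!$ grows polynomially while $(1-p)^{s-w}\approx e^{-ps}$ decays, so the action is near $s\approx w/p\approx wn/r$. The logarithmic loss in the statement enters exactly at this point: to make the high-degree tail negligible one thresholds the degree at $D\approx(w/p)\ln r$ instead of at the bare peak $w/p$, and since ${D\choose w}\approx D^w/w!$ this inflates the denominator by a factor $\ln^w r$, giving $M\approx\frac{1}{w!}(wn\ln r/r)^w$. Feeding this back produces a bound of the form $t\ge(1-\epsilon)\frac{w^w}{(w+1)^{2w+1}}\cdot\frac{r^{w+1}}{\ln^w r}$, the constant and the $(w+1)$ powers coming from a careful joint choice of $p$ and $D$; together with the private-set case this gives the claimed minimum.
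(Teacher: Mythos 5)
Your reduction to transversals is sound and is a genuinely different route from the paper's. The paper never phrases anything as a hitting-set problem: it partitions the \emph{points} by degree into $V_0$ (degree exactly $w$), $V_1$ (degree strictly between $w$ and $m=\frac{(w+1)^2n\ln r}{wr}$) and $V_2$ (degree at least $m$), finds by averaging a single $w$-set $j'$ whose intersection contains at most $r/(w+1)$ points of $V_1$, covers those individually with one block each, covers all of $V_2$ at once by $\lceil wr/(w+1)\rceil$ random blocks, and so exhibits at most $r$ blocks covering $\bigcap_i F_{j_i'}$ — a contradiction. Your private/good dichotomy is exactly the paper's treatment of $V_0$. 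What you do differently is to replace the hand-built cover for one well-chosen $W$ by the random-transversal inequality $r+1\le\tau_W\le p(n-w)+\sum_{x\in P_W}(1-p)^{|S_x|-w}$ averaged over all good $W$ with a double count. That is cleaner and more uniform: the paper's threshold $m$ and its two-stage cover are absorbed into the single choice of $p$.

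The step you yourself call the crux is, however, garbled. The function $g(s)={s\choose w}(1-p)^{s-w}$ peaks near $s=w/p$, where it is about $\frac{1}{w!}\left(\frac{w}{ep}\right)^w$; it is \emph{not} about $\frac{1}{w!}\left(\frac{w\ln r}{p}\right)^w$. The latter arises from the cruder decomposition $\sum_x {d_x\choose w}(1-p)^{d_x-w}\le t{D\choose w}+(\hbox{tail})$ with $D=(w/p)\ln r$, i.e.\ from discarding the factor $(1-p)^{d_x-w}$ below the threshold; ``thresholding to kill the tail'' is not how one evaluates a maximum. Because your stated value \emph{over}-estimates $M$, the final inequality still points the right way, so the argument is salvageable — but you must pick one of the two estimates and carry it through. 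With the true maximum your method yields $t\ge c(w,\epsilon)\,r^{w+1}$ with no $\ln^w r$ loss at all, which is \emph{stronger} than the lemma once $\ln r$ exceeds a constant depending only on $\epsilon$; with your threshold it yields a constant of order $(1-\epsilon)^{w+1}(2w)^{-w}$, which is not the stated $(1-\epsilon)\frac{w^w}{(w+1)^{2w+1}}$ and has a worse dependence on $\epsilon$. In either case the constant in the statement is asserted rather than derived, and neither variant reproduces the lemma verbatim for all $w\le r\le n/2$; since the lemma is consumed asymptotically (its constant enters the induction of Section~\ref{s2} only up to a factor of $2$), this is a repairable defect, but as written the quantitative half of your proof does not yet exist.
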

\begin{proof} Let $(X,{\cal F})$ be an optimal $(w,r)$-CFF. Let ${\cal F}=\{F_1,\ldots,F_n\}$, $|X|=N=N(n,(w,r))$ and assume without loss
of generality that $X=[N]:=\{1,\ldots,N\}$.
Define $v^{(i)}\in\{0,1\}^n$, $i=1,\ldots,N$ where $v^{(i)}_j=1$ if and only if
$i\in F_j$. Let $V=\{v^{(i)}|i=1,\ldots,N\}$. Let $V_0$ be the set of $v^{(i)}$ of weight $wt(v^{(i)})$ (i.e., $\sum_jv^{(i)}_j$) equal to $w$. Let
$$m=\frac{(w+1)^2n\ln r}{wr}$$
and consider the two sets $V_1=\{v^{(i)}\ |\ w<wt(v^{(i)})< m\}$ and  $V_2=\{v^{(i)}\ |\ wt(v^{(i)})\ge m\}$. Obviously, $V=V_0\cup V_1\cup V_2$ is a partition of $V$. Suppose
$$|V_0|\le \epsilon{n\choose w}$$ and
$$\max(|V_1|,|V_2|)\le (1-\epsilon)\frac{w^w}{(w+1)^{2w+1}}\cdot \frac{r^{w+1}}{\ln^wr}.$$

Consider $W=\{(j_1,\ldots,j_w)\ |\ 1\le j_1<\cdots<j_w\le n\}$ and $W'\subset W$
the set of all $(j_1,\ldots,j_w)$ where no $v^{(i)}\in V_0$, $i=1,\ldots,N$, satisfies $v_{j_1}^{(i)}=\cdots=v^{(i)}_{j_w}=1$. Obviously, $$|W'|={n\choose w}-|V_0|\ge (1-\epsilon){n\choose w}.$$
Fix an element $v\in V_1$ and randomly and uniformly choose
$j=(j_1,\ldots,j_w)\in W'$. We have
\begin{eqnarray*}
\Pr_{j\in W'}[v_{j_1}=\cdots=v_{j_w}=1] &\le& \frac{{wt(v)\choose w}}{|W'|}
\le\frac{{m\choose w}}{(1-\epsilon){n\choose w}}.
\end{eqnarray*}
Therefore, the expectation of the number of $v\in V_1$ for which $v_{j_1}=\cdots=v_{j_w}=1$ is at most
\begin{eqnarray*}
\frac{{m\choose w}|V_1|}{(1-\epsilon){n\choose w}}&\le & \frac{1}{1-\epsilon}\left(\frac{m}{n}\right)^w|V_1|\\
&\le& \frac{1}{1-\epsilon}\frac{(w+1)^{2w}\ln^wr}{w^wr^w}\cdot (1-\epsilon)\frac{w^w}{(w+1)^{2w+1}}\cdot \frac{r^{w+1}}{\ln^wr}\\
&=& \frac{r}{w+1}.
\end{eqnarray*}
Therefore, there is $j'=(j_1',\ldots,j_w')\in W'$ such that the number of $v\in V_1$
that satisfies $v_{j_1'}=\cdots=v_{j_w'}=1$ is $r_1\le r/(w+1)$. Since the weight of every $v\in V_1$ is greater than $w$, we can choose $r_1$ new entries $j_1'',\ldots,j_{r_1}''\not\in \{j_1',\ldots,j_w'\}$ such that for every $v\in V_1$ where $v_{j_1'}=\cdots=v_{j_w'}=1$ there is $j_\ell''$ such that $v_{j_\ell''}=1$.

Now randomly and uniformly choose $$r_2:=\left\lceil \frac{wr}{w+1}\right\rceil$$
distinct $k_1,\ldots,k_{r_2}\in [n]$. Let $A$ be the event that $\{k_1,\ldots,k_{r_2}\}\cap \{j_1',\ldots,j_w'\}\not=\O$.
The probability that $A$ does not happen is
\begin{eqnarray*}
\frac{{n-w\choose r_2}}{{n\choose r_2}}\ge  \frac{{n-w \choose r_2}}{2^w {n-w\choose r_2}}=\frac{1}{2^w}
\end{eqnarray*}

Then
\begin{eqnarray*}
\Pr[A\vee(\exists v\in V_2)\ v_{k_1}=\cdots=v_{k_{r_2}}=0]&\le& 1-\frac{1}{2^w}+|V_2|\frac{{n-m\choose r_2}}{{n\choose r_2}}\\
&\le& 1-\frac{1}{2^w}+|V_2| \left(\frac{n-m}{n}\right)^{r_2}\\
&\le& 1-\frac{1}{2^w}+|V_2|e^{-\frac{mr_2}{n}}
\end{eqnarray*}
and
\begin{eqnarray*}
|V_2|e^{-\frac{mr_2}{n}}&\le&(1-\epsilon)\frac{w^w}{(w+1)^{2w+1}}\cdot \frac{r^{w+1}}{\ln^wr}\cdot e^{-\frac{(w+1)^{2}\ln r}{wr}r_2 }\\
&\le &  (1-\epsilon)\frac{w^w}{(w+1)^{2w+1}}\cdot \frac{r^{w+1}}{\ln^wr}\cdot e^{-(w+1)\ln r}\\
&= &  (1-\epsilon)\frac{w^w}{(w+1)^{2w+1}}\cdot \frac{1}{\ln^wr}\\
&<  &  \frac{1}{2^w}
\end{eqnarray*}
Therefore,
\begin{eqnarray*}
\Pr[A\vee(\exists v\in V_2)\ v_{k_1}=\cdots=v_{k_{r_2}}=0]<1.
\end{eqnarray*}

Therefore, there is $\{k_1,\ldots,k_{r_2}\}$ such that
$\{k_1,\ldots,k_{r_2}\}\cap \{j_1',\ldots,j_w'\}=\O$ and for every $v\in V_2$ there
is $k_\ell \in \{k_1,\ldots,k_{r_2}\}$ where $v_{k_\ell}=1$.

Now it is easy to see that there is no $v\in V$ where $v_{j_1'}=\cdots=v_{j_w'}=1$,
$v_{j_1''}=\cdots=v_{j_{r_1}''}=0$ and $v_{k_1}=\cdots=v_{k_{r_2}}=0$. This implies that
$$\bigcap_{i=1}^w F_{j_i'}\subseteq \bigcup_{i=1}^{r_1} F_{j_i''}\cup\bigcup_{i=1}^{r_2} F_{k_i}$$ which is a contradiction.
\end{proof}

\section{The Second Bound}\label{s2}

In this section we prove Theorem~\ref{T12}.

\ignore{\begin{lemma}
 For $r\leq (n-w+1)^\frac{1}{2}$ and $w\ge1$ we have
 $$N(n,(w,r)) \ge 2c\frac{r^{w+1}\log (n-w+1)}{(w+1)!\log (r+1)} $$
where $c$ is the best constant for $N(n,(1,r))$ and $n^\frac{1}{2}\ge r\ge2$.
\end{lemma}
\begin{proof}
We prove the lemma by induction on $w$.

From \cite{DR82,F96,R94}, there is a constant $c<1$ such that for $n^\frac{1}{2}\ge r\ge2$,
$$N(n,(1,r)) \ge c\frac{r^2\log n}{\log r}\ge c\frac{r^2\log n}{\log (r+1)}.$$
For $r=1$, Sperner's theorem states that
$$N(n,(1,1))\ge \log n\ge c\log n.$$
Now assume the bound holds for some $w$ and every $r$ that satisfies $r \leq (n-w+1)^\frac{1}{2}$, we now prove the bound for $w+1$ and $r\le (n-w)^\frac{1}{2}$
\begin{eqnarray}
N(n,(w+1,r)) &\ge & N(n-1,(w,r))) + N(n-1,(w+1,r-1))\label{I1} \\
&\ge & \sum\limits_{j=1}^{r} N(n-r+j-1,(w,j))\label{I2} \\
&\ge & \sum\limits_{j=1}^{r} \frac {2c}{(w+1)!} \frac {\log (n-r-w+j)}{\log (j+1)}j^{w+1}\label{I3} \\
&\ge & 2c\frac {\log (n-w)}{(w+1)!\log (r+1)}\sum\limits_{j=1}^{r} j^{w+1}\label{I4}\\
&\ge & 2c\frac {\log (n-w)}{(w+1)!\log (r+1)}\int_0^r \! x^{w+1} \, \mathrm{d}x\nonumber\\
&=& 2c\frac{r^{w+2}\log (n-(w+1)+1)}{(w+2)!\log (r+1)}.\nonumber
\end{eqnarray}
Here, inequality (\ref{I1}) comes from \cite{SWZ00}.
Inequality (\ref{I2}) follows from the fact that $N(n-r+1,(w+1,1))\ge N(n-r,(w,1))$.
Inequality (\ref{I3}) follows from the induction hypothesis since
\begin{eqnarray*}
j&=&r+j-r\\
&\le& (n-w)^\frac{1}{2}-(r-j)\\
&\le& (n-w-(r-j))^\frac{1}{2}\\
&=& ((n-r+j-1)-w+1)^\frac{1}{2}.
\end{eqnarray*}
Inequality (\ref{I4}) follows from the fact that $\log(n-r-w+x)/\log (1+x)$ is a
monotonic decreasing function for $x\in [1,r]$.
\end{proof}

We now prove}

\begin{lemma}
For any $2\le k\le w\le r\le n/2$ and $$ 2 \leq r \leq (n+k-w)^{\frac {k}{k+1}}$$
$$N(n,(w,r)) \ge  \frac {k^kk!}{2(k+1)^{2k}} \frac {r^{w+1}}{(w+1)!\ln^k r} = \Omega \left(\frac {r^{w+1}}{(w+1)!\ln^k r}\right).$$
\end{lemma}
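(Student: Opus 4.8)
The plan is to prove the inequality by induction on $w$, with base case $w=k$ (so the power of $\ln r$ in the denominator is fixed at the outset and never grows), the inductive step being driven by the superadditivity recursion~(\ref{base1}). Writing $C=\frac{k^kk!}{2(k+1)^{2k}}$, I carry the invariant
$$N(n,(w,r))\ \ge\ C\,\frac{r^{w+1}}{(w+1)!\,\ln^k r},$$
establishing it for all admissible triples with $k\le w$ and $2\le r\le(n+k-w)^{k/(k+1)}$.

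For the base case $w=k$ the range is $2\le r\le n^{k/(k+1)}$, so $r^{k+1}\le n^k$. For $r\ge k$ I apply Lemma~\ref{T1} with $\epsilon=1/2$: its first branch $\frac12\cdot\frac{k^k}{(k+1)^{2k+1}}\cdot\frac{r^{k+1}}{\ln^k r}$ is \emph{identically} $C\,r^{k+1}/((k+1)!\,\ln^k r)$, while its second branch $\frac12{n\choose k}$ exceeds this because $r^{k+1}\le n^k$ and $(k+1)^{2k+1}\ge k^{2k}$; hence the minimum in Lemma~\ref{T1} equals the desired bound with \emph{no slack in the constant}. This exact matching is what forces the inductive step and the boundary estimates to be performed tightly.

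For the step, unrolling $N(n,(w{+}1,r))\ge N(n{-}1,(w,r))+N(n{-}1,(w{+}1,r{-}1))$ down to $r=1$ (and using $N(n{-}r{+}1,(w{+}1,1))\ge N(n{-}r,(w,1))$ at the bottom) telescopes to
$$N(n,(w{+}1,r))\ \ge\ \sum_{j=1}^{r}N(n-r+j-1,(w,j)).$$
Applying the induction hypothesis to the summands with $j\ge2$, bounding $\ln^k j\le\ln^k r$, and comparing the increasing summand $j^{w+1}$ to its integral gives
$$N(n,(w{+}1,r))\ \ge\ \frac{C}{(w+1)!\,\ln^k r}\int_{0}^{r}x^{w+1}\,dx\ =\ \frac{C\,r^{w+2}}{(w+2)!\,\ln^k r},$$
which is the invariant at level $w+1$. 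Here the collapse $\frac{1}{(w+1)!}\cdot\frac{1}{w+2}=\frac{1}{(w+2)!}$ and the monotone extraction of $\ln^k r$ are precisely what regenerate the same constant $C$.

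Two verifications carry the real weight. The main obstacle is confirming that every summand stays inside the domain of the induction hypothesis, namely $j\le\bigl((n-r+j-1)+k-w\bigr)^{k/(k+1)}$. Setting $A=n+k-w-1$, $B=r-j\ge0$ and $\alpha=\frac{k}{k+1}$, the level-$(w{+}1)$ constraint $r\le A^{\alpha}$ gives $j\le A^{\alpha}-B$, so it suffices to show $A^{\alpha}-B\le(A-B)^{\alpha}$, which is exactly the concavity of $x^{\alpha}$ and holds whenever $A-B\ge1$ (guaranteed by $r+w\le n$). This concavity step, with the sublinear exponent $\alpha=k/(k+1)$, is what produces the admissible range $r\le(n+k-w)^{k/(k+1)}$ in the statement. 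The remaining care is purely at the boundary: I prove the invariant for \emph{every} $r\ge2$, so the induction hypothesis covers all summands $j\ge2$; the two degenerate pieces — the base sub-range $2\le r<k$, where Lemma~\ref{T1} does not apply, and the single $j=1$ term, where $\ln j$ vanishes — are disposed of by De~Morgan duality together with the existing bound~(\ref{bst}), whose right-hand sides there are fixed $n$-independent constants dominated by the genuinely growing family sizes, so retaining them costs nothing and the exact constant $C$ survives.
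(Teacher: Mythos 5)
Your argument is essentially identical to the paper's own proof: induction on $w$ with the base case $w=k$ supplied by Lemma~\ref{T1} (taking $\epsilon=1/2$, for which the first branch matches the claimed constant exactly), unrolling the recursion (\ref{base1}) into $\sum_{j}N(n-r+j-1,(w,j))$, the concavity estimate $A^{k/(k+1)}-B\le (A-B)^{k/(k+1)}$ to keep every summand inside the inductive range, and the monotonicity-plus-integral comparison that regenerates the same constant at level $w+1$. The only difference is that you make explicit the boundary checks (the $j=1$ term, and the verification that the $\epsilon\binom{n}{k}$ branch of Lemma~\ref{T1} is not the minimum, where your justification should also invoke $r\ge k$ to control the $\ln^k r$ factor) that the paper leaves implicit.
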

\begin{proof}
We prove the lemma by induction on $w$.

From Lemma~\ref{T1} the lemma holds for $w=k$. Now assume the bound holds for some $w$ and every $r$ that satisfies $r \leq (n+k-w)^\frac{k}{k+1}$. We now prove the bound for $w+1$ and $r\le (n+k-w-1)^\frac{k}{k+1}$
\begin{eqnarray}
N(n,(w+1,r)) &\ge & N(n-1,(w,r))) + N(n-1,(w+1,r-1))\label{I5} \\
&\ge & \sum\limits_{j=1}^{r} N(n-r+j-1,(w,j))\label{I6} \\
&\ge & N(n-r,(w,1)) +\nonumber \\ & &\ \ \ \ \ \ \sum\limits_{j=2}^{r} \frac {k^kk!}{2(k+1)^{2k}} \frac {j^{w+1}}{(w+1)!\ln^k j} \label{I7}\\
&\ge & \frac {k^kk!}{2(k+1)^{2k}(w+1)!\ln^k r} \sum\limits_{j=1}^{r} j^{w+1} \nonumber \\
&\ge & \frac {k^kk!}{2(k+1)^{2k}(w+1)!\ln^k r} \int_0^r \! x^{w+1} \, \mathrm{d}x \nonumber \\
&\ge&  \frac {2k^kk!}{(k+1)^{2k}} \frac {r^{w+2}}{(w+2)!\ln^k r} \nonumber
\end{eqnarray}

Here, inequality (\ref{I5}) comes from \cite{SWZ00}.
Inequality (\ref{I6}) follows from the fact that $N(n-r+1,(w+1,1))\ge N(n-r,(w,1))$.
Inequality (\ref{I7}) follows from the induction hypothesis since
\begin{eqnarray*}
j&=&r-(r-j)\\
&\le& (n+k-w-1)^\frac{k}{k+1}-(r-j)\\
&\le& (n+k-w-1-(r-j))^\frac{k}{k+1}\\
&=& ((n-r+j-1)+k-w)^\frac{k}{k+1}.
\end{eqnarray*}
\end{proof}

\end{document}